\documentclass[12pt]{amsart}

\usepackage[left=15mm, right=15mm, top=10mm, bottom=15mm]{geometry}

\usepackage[english]{babel}
\usepackage[T1]{fontenc}
\usepackage[utf8]{inputenc}

\theoremstyle{plain}
    \newtheorem{theorem}{Theorem}[section]
    \newtheorem{lemma}[theorem]{Lemma}
    \newtheorem{proposition}[theorem]{Proposition}

\theoremstyle{definition}

\usepackage[
    draft = false,
    unicode = true,
    colorlinks = true,
    allcolors = blue,
    hyperfootnotes = true,
    citecolor = red
]{hyperref}
\usepackage{amsfonts,amsmath,amssymb,amscd,amsthm,url}
\usepackage[inline]{enumitem}
\usepackage{graphicx,epsf,afterpage, wrapfig}
\usepackage{soul}
\usepackage{multicol}
\usepackage{array}
\usepackage{braket}
\usepackage{epigraph}
\usepackage{rotating, floatflt}
\usepackage{xstring} 

\usepackage[
backend=biber,
style=numeric-comp,
sorting=none,
giveninits=true
]{biblatex}
\usepackage{csquotes}

\usepackage{xcolor}

\usepackage{ulem}

\usepackage{tikz}
\usetikzlibrary{positioning, calc, intersections, through, arrows, matrix, chains, math}
\usetikzlibrary{arrows.meta}
\usetikzlibrary{shadows}
\usetikzlibrary{graphs}
\usetikzlibrary{graphs.standard}
\usetikzlibrary{patterns}
\usetikzlibrary{decorations.pathreplacing,calligraphy,backgrounds}

\DeclareFieldFormat{usera}{\href{https://arxiv.org/abs/#1}{\texttt{arXiv:#1}}}

\renewbibmacro*{finentry}{\printfield{usera}\newunit\finentry}

\renewbibmacro{in:}{%
  \iffieldundef{usera}
    {\printtext{\bibstring{in}\intitlepunct}}
    {}%
}

\newcommand\norm[1]{\ensuremath{\left\lVert#1\right\rVert}}
\newcommand\abs[1]{\ensuremath{\left\lvert#1\right\rvert}}

\newcommand{\Bcal}{\mathcal{B}}

\newcommand{\Dcal}{\mathcal{D}}

\newcommand{\Hcal}{\mathcal{H}}

\newcommand{\Pcal}{\mathcal{P}}

\newcommand{\Ucal}{\mathcal{U}}

\newcommand{\Dbf}{{\ensuremath{\mathbf{D}}}}

\newcommand{\Pbf}{{\ensuremath{\mathbf{P}}}}

\newcommand{\Ubf}{{\ensuremath{\mathbf{U}}}}

\newcommand{\Phbf}{{\ensuremath{\mathbf{\Phi}}}}

\newcommand{\defing}[1]{\textbf{\emph{\mathversion{bold}#1}}}

\newcommand{\Z}{\ensuremath{\mathbb{Z}}}
\newcommand{\Cx}{\ensuremath{\mathbb{C}}}

\newcommand{\N}{\ensuremath{\mathbb{N}}}

\renewcommand{\geq}{\geqslant}

\newcounter{mcnt}

\newcounter{wordcnt}

\begin{document}

\title{Quantum channels preserving sigma-additivity \\ on diagonal von Neumann algebras \\ and Ulam measurable cardinals}

\author{Sviatoslav.\,V. Dzhenzher}

\begin{abstract}
     This paper investigates the interplay between the properties of quantum states on the Hilbert space \(\ell_2(\kappa)\) and the set-theoretic nature of the cardinal $\kappa$.
     We focus on the existence of singular $\sigma$-additive states~--- functionals whose induced measures are $\sigma$-additive yet vanish on singletons.
     While the existence of such states is known to be equivalent to the Ulam measurability of $\kappa$, their structural and dynamical properties remain largely unexplored.
     We prove that any $\sigma$-additive state on the diagonal algebra is representable as a Pettis integral over a singular $\sigma$-additive measure, extending the classical representation theory to the non-normal sector.
     Furthermore, we construct a class of quantum channels using $\sigma$-complete ultrafilters that map normal states to singular $\sigma$-additive states, effectively <<archiving>> information into the singular part of the state space.
\end{abstract}

\thanks{\hspace{-4mm}
S.\,V. Dzhenzher: sdjenjer@yandex.ru. orcid: 0009-0008-3513-4312
\\
Moscow Institute of Physics and Technology 141701, Institutskii per. 9, Dolgoprudny, Russia}

\maketitle
\thispagestyle{empty}

\emph{Keywords: measurable cardinals; quantum dynamics; singular quantum states; Pettis integrals.}

\vspace{5mm}

\emph{MSC: 03E55, 46L30; 81P47, 46G10}

\section{Introduction}

The mathematical foundations of quantum mechanics rely on the representation of physical states as linear functionals on an algebra of observables. For a Hilbert space $\Hcal$, the state space \(\Sigma(\Hcal)\) is traditionally decomposed into normal states, which are $*$-weakly continuous and uniquely represented by trace-class operators, and singular states, which vanish on the ideal of compact operators \cite{Yosida-Hewitt}.

A powerful tool for the analysis of state structures is the Pettis integral, which allows spectral-like decomposition of states over the set of pure vector states. As established by Amosov and Sakbaev \cite{Amosov-Sakbaev-2013, Amosov-Sakbaev-2025}, in the standard separable setting, a state is normal if and only if it can be represented as a Pettis integral over a discrete $\sigma$-additive measure. This equivalence reinforces the intuition that $\sigma$-additivity is a hallmark of normality.

While a general quantum channel or operation is mathematically defined as a completely positive map on the state space \cite{Kholevo-qsys-chan-inf}, its operational applications in quantum communication typically avoid singular structures due to the strict requirements of information transmission and energy constraints \cite{Kholevo-qprobstat, Nielsen-Chuang}.
However, recent developments in the study of quantum dynamics on infinite-dimensional systems have challenged this boundary. Investigations into random shifts, operator walks, and limit dynamics on Abelian algebras have shown that singular states naturally emerge as limits of physical processes \cite{Volovich-Sakbaev-2014, Volovich-Sakbaev-2018, Orlov-Sakbaev-25, DzhenzherDzhenzherSakbaev26}. This raises two fundamental questions:
\begin{enumerate}[label=\arabic*.]
    \item can a state be simultaneously singular and $\sigma$-additive?

    \item can a normal state be Pettis representable over a $\sigma$-additive measure which vanishes on singletons?
\end{enumerate}
In standard ZFC set theory, if there exists a completely-additive measure that vanishes on singletons on the powerset of some set, then this set has to be a measurable cardinal \cite{Jech-p1-ch10}. Thus, the existence of non-normal $\sigma$-additive states is intrinsically linked to the existence of measurable cardinals.

The existence of such states was recently confirmed by Blecher and Weaver \cite{Blecher-Weaver}, who proved that singular $\sigma$-additive states exist on \(\Bcal(\ell_2(\kappa))\) if and only if $\kappa$ is an Ulam measurable cardinal. While their work provides a definitive set-theoretic classification, the operational nature and the internal structure of these states remain to be explored.
In this paper, we bridge the gap between the axiomatic results of \cite{Blecher-Weaver} and the representation theory developed in \cite{Amosov-Sakbaev-2013, Amosov-Sakbaev-2025}. Our contribution is twofold:

\begin{enumerate}[label=\arabic*.]
    \item We demonstrate that the correspondence between states and Pettis integrals <<survives>> the transition to the singular sector on measurable cardinals. Specifically, we prove that any $\sigma$-additive state on the diagonal algebra \(\Dcal(\ell_2(\kappa))\cong\ell_\infty(\kappa)\) is a Pettis integral over its induced $\sigma$-additive measure, even when that measure vanishes on singletons.

    \item We construct an explicit class of quantum channels using $\sigma$-complete ultrafilters. We show that these channels act as a mechanism for <<information archiving>>, mapping normal states into singular $\sigma$-additive outcomes. This provides a dynamical interpretation of measurable cardinals as a domain where quantum information can be preserved in a $\sigma$-additive manner while becoming inaccessible to finite-dimensional observations.
\end{enumerate}

The paper is organized as follows: Section~\ref{s:mc} provides the necessary background on large cardinals; Section~\ref{s:sing} establishes the Pettis representation of singular states; and Section~\ref{s:qc} introduces and analyzes the properties of the archiving quantum channels.

\section{Measurable cardinals}\label{s:mc}

Let \(\{r_i:i\in I\}\) be a collection of non-negative real numbers.
Recall that
\[
    \sum_{i \in I}r_i = \sup \left\{\sum_{i \in E} r_i : \text{$E\subset I$ is finite}\right\}.
\]
If the sum is finite, then there is at most countable countable collection of non-zero $r_i$.

For a set $X$, denote by \(\Pcal(X)\) the set of subsets of $X$.

Let \(\kappa>\aleph_0\) be a cardinal.
Recall that a measure \(\mu\colon \Pcal(\kappa)\to [\,0,1\,]\) is a finitely-additive map.
In this paper we always assume probability measures; that is, \(\mu(\kappa)=1\).
Following \cite{Blecher-Weaver}, we say that a measure \(\mu\colon \Pcal(\kappa)\to [\,0,1\,]\) is \defing{$<\!\!\kappa$-additive}, if for any collection \(\{A_i:i\in I\}\) of pairwise disjoint subsets \(A_i \subset \kappa\) with \(\abs{I} < \kappa\) we have
\[
    \mu\left(\bigsqcup_{i \in I} A_i\right) = \sum_{i\in I} \mu(A_i).
\]
Note that the usual definition of \emph{$\kappa$-additivity} is not consistent with $\sigma$-additivity, when exactly countable collection is taken.
That is why we use the notation $<\!\!\kappa$.

An ultrafilter \(\Ucal\subsetneq \Pcal(\kappa)\) on a cardinal \(\kappa\) is said to be \defing{$<\!\!\kappa$-complete}, if for any collection \(\{A_i:i\in I\}\) of elements \(A_i \in \Ucal\) with \(\abs{I} < \kappa\), their intersection also lies in the ultrafilter:
\[
    \bigcap_{i\in I} A_i \in \Ucal.
\]
Note that if an ultrafilter is $<\!\!\kappa$-complete for \(\kappa>\aleph_0\), then it is $\sigma$-complete, meaning that it is closed under countable intersections.
An ultrafilter is \defing{non-principal} if it does not have a least element; in other words, if the intersection of its elements is empty.

There is a strong connection between two-valued measures and ultrafilters.
So, given an ultrafilter \(\Ucal\subsetneq \Pcal(\kappa)\), one may define the two-valued measure \(\mu_\Ucal\colon \Pcal(\kappa)\to \{0,1\}\) by
\[
    \mu_\Ucal(A) := \begin{cases}
        1, &\text{if $A\in \Ucal$}, \\
        0, &\text{else}.
    \end{cases}
\]
Conversely, given a two-valued probability measure \(\mu\), one may construct an ultrafilter of sets of measure~$1$.
It is easy to verify that
\begin{itemize}
    \item a measure \(\mu_\Ucal\) is ($\sigma$-) $<\!\!\kappa$-additive if and only if \(\Ucal\) is ($\sigma$-) $<\!\!\kappa$-complete, and
    \item a measure \(\mu_\Ucal\) vanishes on singletons if and only if \(\Ucal\) is non-principal.
\end{itemize}

Following \cite{Blecher-Weaver}, we say that a cardinal $\kappa>\aleph_0$ is

\begin{itemize}
    \item \defing{measurable}, if there exists a $<\!\!\kappa$-additive two-valued measure \(\Pcal(\kappa)\to\{0,1\}\) that vanishes on singletons;

    \item \defing{real-valued measurable} if there exists a $<\!\!\kappa$-additive probability measure \(\Pcal(\kappa)\to [\,0,1\,]\) that vanishes on singletons;

    \item \defing{Ulam measurable}, if there exists a $\sigma$-additive two-valued measure \(\Pcal(\kappa)\to\{0,1\}\) that vanishes on singletons;

    \item \defing{Ulam real-valued measurable} if there exists a $\sigma$-additive probability measure \(\Pcal(\kappa)\to [\,0,1\,]\) that vanishes on singletons.
\end{itemize}

\begin{figure}
    \centering
\begin{tikzpicture}[
    >=Stealth,
    node distance=2cm,
    cardinal/.style={rectangle, draw=blue!60, fill=blue!5, very thick, minimum size=10mm, font=\sffamily\bfseries},
    descr/.style={font=\small\itshape, text width=4cm, align=center}
]

    \node[cardinal] (MC) {MC};

    \node[cardinal, below=1.5cm of MC] (RVMC) {RVMC};

    \node[cardinal, right=2cm of RVMC] (URVMC) {URVMC};

    \node[cardinal, right=2.6cm of MC] (UMC) {UMC};

    \draw[->, thick] (MC) -- (RVMC) {};
    
    \draw[->, thick] (RVMC) -- (URVMC);
        
    \draw[->, thick] (UMC) -- (URVMC);

    \draw[->, thick] (MC) -- (UMC);

\end{tikzpicture}
\caption{Obvious relations between measurability of caridinals. MC is measurable cardinal. UMC is Ulam measurable cardinal. RVMC is real-valued measurable cardinal. URVMC is Ulam real-valued measurable cardinal}
    \label{fig:mc-relations}
\end{figure}
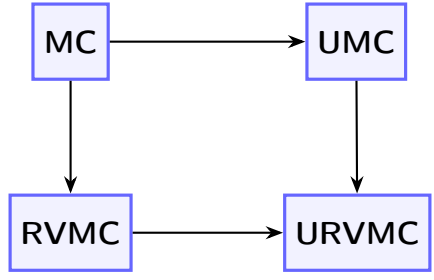

Note that (Ulam) measurability is equivalent to the following: $\kappa$ is (Ulam) measurable, if there exists a non-principal ($\sigma$-) $<\!\!\kappa$-complete ultrafilter.
See some trivial relations between these definitions in Figure~\ref{fig:mc-relations}.
It is also known that a real-valued measurable cardinal $\kappa$ is measurable if \(\kappa > \mathfrak{c}\), and no measurable cardinal can be less than $\mathfrak{c}$.
Moreover, any cardinal $\kappa$ is Ulam measurable if and only if it is equal to or greater than some measurable cardinal, which is inaccessible, and so \(\kappa > \mathfrak{c}\).
Finally, any cardinal $\kappa$ is Ulam real-valued measurable if and only if it is equal to or greater than some real-valued measurable cardinal, which is weakly inaccessible; in this case, if \(\mathfrak{c} \geq \kappa\), the Continuum Hypothesis is false.
See, for example, \cite{Jech-p1-ch10, Blecher-Weaver} for details.

\section{Pettis representability of singular sigma-additive states}\label{s:sing}

Fix any (not necessarily measurable) cardinal \(\kappa>\aleph_0\).
We will consider the Hilbert space \(\Hcal := \ell_2(\kappa)\) of functions \(f\colon \kappa\to\Cx\) with at most countable domain and the norm
\[
    \norm{f} := \sqrt{\sum_{i < \kappa}\abs{f(i)}^2}.
\]
Denote by \(\{e_i\}\) its orthonormal basis.

Denote by \(\Bcal(\Hcal)\) the Banach algebra of linear bounded operators \(\Hcal\to\Hcal\).
We restrict our attention to the Banach subalgebra \(\Dcal(\Hcal) \subset \Bcal(\Hcal)\) of diagonal operators.
It is clear that \(\Dcal(\Hcal)\) is isomorphic to \(\ell_\infty(\kappa)\).
The elements of \(\Dcal(\Hcal)\) are called \defing{observables}.
Note that the projectors from \(\Dcal(\Hcal)\) are exactly the diagonal operators with eigenvalues equal to $1$.
Thus, for \(I\subset\kappa\) we denote by \(\Pbf_I\) the projector onto the subspace generated by basis vectors \(\{e_i\}_{i\in I}\).
In particular, \(\Pbf_\kappa\) is the identity operator.

Denote by \(\Sigma(\Hcal) := S^{1+}(\Dcal(\Hcal)^*)\) the space of \defing{quantum states}, that is, the linear functionals \(\Dcal(\Hcal) \to \Cx\), lying on the intersection of the positive cone and the unit sphere.
We denote the \defing{action} of a quantum state \(\rho \in \Sigma(\Hcal)\) on an observable \(\Dbf\in \Dcal(\Hcal)\) by
\[
    \braket{\rho, \Dbf}.
\]

In quantum mechanics usually consider the subspace \(\Sigma_n(\Hcal) \subset \Sigma(\Hcal)\) of \defing{normal quantum states}, which are $*$-weakly continuous quantum states.
It is known \cite{Blecher-Weaver} that a state \(\rho\in\Sigma(\Hcal)\) is normal if and only if it is \defing{completely additive}, which means that the equality
\[
    \Braket{\rho, \sum_\alpha \Pbf_\alpha} = \sum_\alpha \Braket{\rho, \Pbf_\alpha}
\]
holds for any family of pairwise orthogonal projectors \(\Pbf_\alpha\).
For separable Hilbert spaces, it is known that normality is equivalent to the fact that the supremum of the action over projectors onto finite-dimensional subspaces equals one \cite{Volovich-Sakbaev-2018}. In our case, this is also true.

\begin{proposition}
    A state \(\rho\) is normal if and only if
    \[
        \sup_{\text{$I\subset\kappa$ is finite}} \Braket{\rho, \Pbf_I} = 1,
    \]
    where the supremum is taken over projectors onto finite-dimensional subspaces.
\end{proposition}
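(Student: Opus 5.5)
We prove the equivalence directly from the characterization of normality quoted above from \cite{Blecher-Weaver}: $\rho$ is normal if and only if it is completely additive on families of pairwise orthogonal projectors in $\Dcal(\Hcal)$. Every projector in $\Dcal(\Hcal)$ has the form $\Pbf_I$ with $I\subset\kappa$. Orthogonality of $\Pbf_I$ and $\Pbf_J$ means $I\cap J=\varnothing$, and $\sum_\alpha \Pbf_{I_\alpha}=\Pbf_{\bigsqcup I_\alpha}$, with the sum taken in the strong operator topology. So complete additivity says exactly that the induced measure $\mu_\rho(I):=\braket{\rho,\Pbf_I}$ on $\Pcal(\kappa)$ is completely additive.

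\emph{Necessity.} Assume $\rho$ is normal. Apply complete additivity to the family of singleton projectors $\{\Pbf_{\{i\}}\}_{i<\kappa}$, whose sum is $\Pbf_\kappa$. This gives $1=\braket{\rho,\Pbf_\kappa}=\sum_{i<\kappa}\mu_\rho(\{i\})$. By the definition of an uncountable sum as a supremum of finite partial sums, and by finite additivity $\mu_\rho(I)=\sum_{i\in I}\mu_\rho(\{i\})$ for finite $I$, we get $\sup_{I \text{ finite}}\braket{\rho,\Pbf_I}=1$.

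\emph{Sufficiency.} Assume the supremum equals $1$. Set $p_i:=\mu_\rho(\{i\})\geq 0$. The hypothesis gives $\sum_i p_i=1$, so at most countably many $p_i$ are nonzero.

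First we show that $\mu_\rho(A)=\sum_{i\in A}p_i$ for every $A\subset\kappa$. The inequality $\geq$ follows from monotonicity: $\mu_\rho(A)\geq\mu_\rho(F)$ for finite $F\subset A$. For $\leq$, apply the same bound to the complement: $\mu_\rho(\kappa\setminus A)\geq\sum_{i\notin A}p_i$. Then
\[
\mu_\rho(A)=1-\mu_\rho(\kappa\setminus A)\leq 1-\sum_{i\notin A}p_i=\sum_{i\in A}p_i.
\]

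Complete additivity now follows. For pairwise disjoint $I_\alpha$, regrouping a sum of nonnegative terms gives
\[
\mu_\rho\Bigl(\bigsqcup I_\alpha\Bigr)=\sum_{i\in\bigsqcup I_\alpha}p_i=\sum_\alpha\sum_{i\in I_\alpha}p_i=\sum_\alpha\mu_\rho(I_\alpha).
\]
Hence $\rho$ is normal.

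Alternatively, one can avoid the cited characterization and check $*$-weak continuity directly. For $\Dbf=\mathrm{diag}(d_i)$, split
\[
\braket{\rho,\Dbf}=\braket{\rho,\Dbf\Pbf_F}+\braket{\rho,\Dbf\Pbf_{\kappa\setminus F}}
\]
with $F$ finite. Positivity gives $\abs{\braket{\rho,\Dbf\Pbf_{\kappa\setminus F}}}\leq\norm{\Dbf}\,\mu_\rho(\kappa\setminus F)$, which can be made small. Letting $F$ increase shows $\braket{\rho,\Dbf}=\sum_i p_i d_i$, so $\rho$ is given by the $\ell_1$ vector $(p_i)$.

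The only step that needs care is the identity $\mu_\rho(A)=\sum_{i\in A}p_i$ for arbitrary $A$. It is what rules out a singular component, and it uses only finite additivity together with the hypothesis.
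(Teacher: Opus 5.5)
Your proof is correct and follows essentially the same route as the paper. Necessity comes from complete additivity applied to the singleton projectors. Sufficiency comes from monotonicity on both $A$ and $\kappa\setminus A$, which forces $\braket{\rho,\Pbf_A}=\sum_{i\in A}\braket{\rho,\Pbf_{\{i\}}}$, followed by regrouping of nonnegative sums; your extra direct check of $*$-weak continuity via the $\ell_1$ vector $(p_i)$ is also valid and avoids relying on the cited characterization for sufficiency.
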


\begin{proof}
    Suppose \(\rho\) is normal. Then it is completely additive. Hence,
    \[
        1 = \Braket{\rho, \Pbf_\kappa} = \sum_{i<\kappa} \Braket{\rho, \Pbf_i} = \sup_{I\subset\kappa, \text{ finite}} \Braket{\rho, \Pbf_I},
    \]
    where the first equality follows from the normalization of states, the second equality follows from complete additivity, and the last equality follows from the definition of the sum.

    Conversely, suppose that
    \[
        \sup_{I\subset\kappa, \text{ finite}} \Braket{\rho, \Pbf_I} = 1.
    \]
    By the definition of the supremum, for any \(I\subset\kappa\), we have
    \[
        \Braket{\rho, \Pbf_I} \geq \sum_{i\in I}\Braket{\rho, \Pbf_i}.
    \]
    Then
    \[
        1 = \Braket{\rho, \Pbf_\kappa} = \Braket{\rho, \Pbf_I} + \Braket{\rho, \Pbf_{\kappa\setminus I}} \geq \sum_{i \in I} \Braket{\rho, \Pbf_i} + \sum_{i \in \kappa\setminus I} \Braket{\rho, \Pbf_i} = \sum_{i<\kappa} \Braket{\rho, \Pbf_i} = 1.
    \]
    Thus, the inequalities become equalities, and for any $I\subset \kappa$, we have
    \[
        \Braket{\rho, \Pbf_I} = \sum_{i\in I} \Braket{\rho, \Pbf_i}.
    \]
    Then for an arbitrary family \(\{I_\alpha\}\) such that \(I = \bigsqcup_\alpha I_\alpha\), we have
    \[
        \Braket{\rho, \Pbf_I} = \sum_{i\in I} \Braket{\rho, \Pbf_i} = \sum_{\alpha}\sum_{i\in I_\alpha} \Braket{\rho, \Pbf_i} = \sum_\alpha \Braket{\rho, \Pbf_{I_\alpha}},
    \]
    where the rearrangement of terms is possible due to the absolute convergence of the series.
    The equality of the first and last expressions is precisely complete additivity.
\end{proof}

For \(u\in S^1(\Hcal)\), a normal quantum state \(\rho=\rho_u\in\Sigma_n(\Hcal)\) is called a \defing{pure vector state}, if for all \(\Dbf\in\Dcal(\Hcal)\) we have
\[
    \Braket{\rho, \Dbf} = (u, \Dbf u).
\]

We will be more interested in another quantum states.
So, we say that a state \(\rho\) lies in the subspace \(\Sigma_s(\Hcal)\subset\Sigma(\Hcal)\) of \defing{singular quantum states} if it vanishes on all rank $1$ projectors; or equivalently, on all compact operators.
In other words, if
\[
    \Braket{\rho, \Pbf_I} =0
\]
for any finite \(I\subset\kappa\).
The Yosida--Hewitt decomposition \cite{Yosida-Hewitt} claims that
\[
    \Sigma(\Hcal) = \Sigma_n(\Hcal) \oplus \Sigma_s(\Hcal),
\]
meaning that any quantum state \(\rho\in\Sigma(\Hcal)\) can be represented as a convex combination
\[
    \rho = p\rho_n + (1-p)\rho_s
\]
for some \(p\in [\,0,1\,]\), \(\rho_n\in\Sigma_n(\Hcal)\), and \(\rho_s\in\Sigma_s(\Hcal)\).

It is known, see for example \cite{Amosov-Sakbaev-2025}, that any quantum state can be represented as the Pettis integral over some finitely-additive non-negative finite measure on the unit sphere of pure vector states.
Since we simplified the algebra of observables to just diagonal operators, we may consider the particular case of such measures, concentrated on the pure vector states on basis vectors \(e_i\).
For a state \(\rho \in \Sigma(\Hcal)\), define the (finitely-additive) \emph{induced measure} \(\nu_\rho\colon \Pcal(\kappa)\to[\,0,1\,]\) by
\[
    \nu_\rho(I) := \Braket{\rho, \Pbf_I}.
\]

\begin{lemma}
    Any state \(\rho\in\Sigma(\Hcal)\) can be represented as the Pettis integral
    \[
        \rho = \int_{i < \kappa} \rho_{e_i}\,d\nu_\rho(i).
    \]
\end{lemma}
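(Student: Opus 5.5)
The plan is to unwind the definition of the Pettis integral and reduce the claim to the standard fact that bounded functions can be integrated against a finitely-additive measure. For the weak (Pettis) integral, the claimed equality means that for every observable \(\Dbf\in\Dcal(\Hcal)\)
\[
    \Braket{\rho, \Dbf} = \int_{i<\kappa} \Braket{\rho_{e_i}, \Dbf}\, d\nu_\rho(i),
\]
where the right-hand side is the integral of the scalar function \(i\mapsto \Braket{\rho_{e_i},\Dbf}\) with respect to the finitely-additive measure \(\nu_\rho\). Writing \(\Dbf = \operatorname{diag}(d_i)\) with \((d_i)\in\ell_\infty(\kappa)\), we get \(\Braket{\rho_{e_i},\Dbf} = (e_i,\Dbf e_i) = d_i\). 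The scalar function is therefore just the bounded function \(d\colon\kappa\to\Cx\). Integrals of bounded functions against a bounded finitely-additive measure on the full power set \(\Pcal(\kappa)\) are well defined as uniform limits of integrals of simple functions, so the right-hand side makes sense.

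The first step is the case of simple observables. If \(\Dbf=\sum_{k=1}^n c_k \Pbf_{I_k}\) with the \(I_k\) pairwise disjoint, linearity of \(\rho\) and the definition of \(\nu_\rho\) give
\[
    \Braket{\rho,\Dbf}=\sum_k c_k\nu_\rho(I_k)=\int d\,d\nu_\rho .
\]
Finite additivity of \(\nu_\rho\) itself follows from linearity of \(\rho\), since \(\Pbf_{I\sqcup J}=\Pbf_I+\Pbf_J\).

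The second step is approximation. Any \(d\in\ell_\infty(\kappa)\) is a uniform limit of simple functions \(d^{(m)}\), obtained by partitioning the bounded range of \(d\) (real and imaginary parts separately) into finitely many small boxes and taking preimages. The corresponding diagonal operators satisfy \(\norm{\Dbf-\Dbf^{(m)}}=\sup_i\abs{d_i-d^{(m)}_i}\to 0\). A state has norm \(1\), so \(\Braket{\rho,\Dbf^{(m)}}\to\Braket{\rho,\Dbf}\). On the other side, \(\abs{\int (d-d^{(m)})\,d\nu_\rho}\le \sup\abs{d-d^{(m)}}\cdot\nu_\rho(\kappa)\to 0\). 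Equality therefore passes to the limit.

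The only delicate point is conventions rather than estimates: we must fix that the Pettis integral of the \(\Sigma(\Hcal)\)-valued map \(i\mapsto\rho_{e_i}\) is tested against observables \(\Dbf\), that is, in the weak-* sense on \(\Dcal(\Hcal)^*\). We must also check that each scalar function \(i\mapsto\Braket{\rho_{e_i},\Dbf}\) is integrable for \(\nu_\rho\), which holds because it is bounded and every subset of \(\kappa\) is \(\nu_\rho\)-measurable. No \(\sigma\)-additivity and no Yosida--Hewitt decomposition is needed, so the statement holds for every state, singular or not.
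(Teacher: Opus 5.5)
Your proposal is correct and follows the paper's proof essentially step for step. You verify the identity on projectors $\Pbf_I$, extend it by linearity to finite-spectrum observables, and pass to the limit along a norm-approximating sequence, using continuity of $\rho$ on one side and the definition of the integral against a finitely-additive measure on the other; your explicit bound $\abs{\int (d-d^{(m)})\,d\nu_\rho}\le \sup\abs{d-d^{(m)}}\,\nu_\rho(\kappa)$ simply spells out what the paper delegates to the Dunford--Schwartz construction of that integral.
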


\begin{proof}
    We need to show that for any \(\Dbf\in\Dcal(\Hcal)\) we have
    \[
        \braket{\rho, \Dbf} = \int_{i < \kappa} (e_i, \Dbf e_i)\,d\nu_\rho(i).
    \]
    First, suppose that \(\Dbf=\Pbf_I\) for some \(I\subset\kappa\). Then
    \[
        \braket{\rho, \Dbf} = \nu_\rho(I) = \int_{i\in I}d\nu_\rho(i) = \int_{i < \kappa} (e_i, \Dbf e_i)\,d\nu_\rho(i).
    \]
    Thus, by linearity, the lemma holds for finite linear combinations
    \[
        \Dbf = \sum_{k=1}^K d_k \Pbf_{I_k}.
    \]

    Now consider the arbitrary \(\Dbf\in\Dcal(\Hcal)\).
    There is a sequence of \(\Dbf_n\) with finite spectrum approximating \(\Dbf\) by norm.
    By continuity of \(\rho\),
    \[
        \braket{\rho, \Dbf} = \lim_{n\to\infty} \braket{\rho, \Dbf_n}.
    \]
    And by the definition of the integral over finitely-additive measures, see for example \cite{Dunford-Schwartz-vol1}, 
    \[
        \lim_{n\to\infty} \braket{\rho, \Dbf_n} =
        \lim_{n\to\infty} \int_{i < \kappa} (e_i, \Dbf_n e_i)\,d\nu_\rho(i)=
        \int_{i < \kappa} \lim_{n\to\infty}(e_i, \Dbf_n e_i)\,d\nu_\rho(i)=
        \int_{i < \kappa} (e_i, \Dbf e_i)\,d\nu_\rho(i),
    \]
    which concludes the proof.
\end{proof}

Thus, there is a correspondence between quantum states and measures on subsets of \(\kappa\).
The following theorem answers on one of our questions in the introduction: a normal state cannot be represented by a measure that vanishes on singletons.

\begin{theorem}
    If a state \(\rho\in\Sigma(\Hcal)\) can be represented as the Pettis integral
    \[
        \rho = \int_{i < \kappa} \rho_{e_i}\,dm(i),
    \]
    and the measure $m$ vanishes on singletons, then $\rho$ is singular.
    
    Thus, a state \(\rho\in\Sigma(\Hcal)\) is singular if and only if the induced measure \(\nu_\rho\) vanishes on singletons.
\end{theorem}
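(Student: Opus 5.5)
The plan is to test the Pettis representation on rank-one projectors and then read off both claims of the theorem.

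First, suppose \(\rho=\int_{i<\kappa}\rho_{e_i}\,dm(i)\) with \(m(\{j\})=0\) for every \(j<\kappa\). By the definition of the Pettis integral, for every \(\Dbf\in\Dcal(\Hcal)\) we have
\[
    \braket{\rho,\Dbf}=\int_{i<\kappa}(e_i,\Dbf e_i)\,dm(i).
\]
We apply this with \(\Dbf=\Pbf_{\{j\}}\). The integrand \((e_i,\Pbf_{\{j\}}e_i)\) is the indicator function of the singleton \(\{j\}\). For a finitely-additive measure, the integral of an indicator \(\mathbf 1_A\) equals \(m(A)\); this is the base case of the integral in \cite{Dunford-Schwartz-vol1}. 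Hence \(\braket{\rho,\Pbf_{\{j\}}}=m(\{j\})=0\).

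For a finite \(I\subset\kappa\), the projector \(\Pbf_I=\sum_{j\in I}\Pbf_{\{j\}}\) is a finite sum. Linearity of \(\rho\) therefore gives \(\braket{\rho,\Pbf_I}=0\). This is exactly the definition of \(\rho\in\Sigma_s(\Hcal)\), which proves the first claim.

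For the equivalence, consider both directions. If \(\nu_\rho\) vanishes on singletons, the previous Lemma represents \(\rho\) as the Pettis integral over \(\nu_\rho\). The first part, applied with \(m=\nu_\rho\), then shows that \(\rho\) is singular. Conversely, if \(\rho\) is singular, then \(\nu_\rho(\{j\})=\braket{\rho,\Pbf_{\{j\}}}=0\) directly from the definition of singularity, since \(\Pbf_{\{j\}}\) is a rank-one projector.

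No step is a real obstacle. The only point needing care is that integrating an indicator against a merely finitely-additive \(m\) returns \(m(A)\). This holds by the construction of the integral via simple functions and needs no countable additivity. It is also worth remarking that we never need \(m=\nu_\rho\) in the first part: singularity comes directly from the representing measure.
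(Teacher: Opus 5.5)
Your proposal is correct and follows the paper's argument. You evaluate the Pettis representation on \(\Pbf_I\) for finite \(I\) to get \(\braket{\rho,\Pbf_I}=m(I)=0\); you reach this via singletons and linearity of \(\rho\), where the paper uses finite additivity of \(m\) directly. You then obtain the equivalence exactly as the paper does, from the Lemma plus the first part in one direction and from the definition of singularity in the other.
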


\begin{proof}
    The first part follows since for a finite subset \(I\subset\kappa\)
    \[
        \Braket{\rho, \Pbf_I} = m(I) = 0.
    \]
    The second part follows since for a finite \(I\subset\kappa\)
    \[
        \Braket{\rho,\Pbf_I} = \nu_\rho(I)
    \]
    is zero either by the first part or by singularity of $\rho$.
\end{proof}



As we recall, the normality of a state is equivalent to the complete additivity of its induced measure.
Following the same logic, we say that a state \(\rho\in\Sigma(\Hcal)\) is \defing{$\sigma$-additive} if the induced measure \(\nu_\rho\) is $\sigma$-additive. From the above, it is clear that all normal states are $\sigma$-additive.
Furthermore, if the Hilbert space \(\ell_2(\kappa)\) were separable, then $\sigma$-additivity and complete additivity would be equivalent. In other words, on a separable Hilbert space, $\sigma$-additive and normal states are one and the same.

By \cite[Proposition~2.3.(i)]{Blecher-Weaver}, a $\sigma$-additive singular state \(\rho\) exists if and only if \(\kappa\) is an Ulam real-valued measurable cardinal.
Thus by the Ulam's result \cite[Theorem~10.1]{Jech-p1-ch10} (see the final argument of the proof of this theorem there), if there exists a $\sigma$-additive singular state \(\rho\), then the following two options are possible:

\begin{itemize}
    \item either \(\kappa\) is equal to or greater than the least measurable cardinal, which is inaccessible, and so \(\kappa > \mathfrak{c}\);

    \item or \(\mathfrak{c} \geq \kappa\), and \(\kappa\) is equal to or greater than the least real-valued measurable cardinal, which is weakly inaccessible. In this case, the Continuum Hypothesis is false.
\end{itemize}

In order to leave the Continuum Hypothesis in peace, let us now focus on the world where an Ulam measurable cardinal exists.
Let \(\kappa\) be such a cardinal.
By \cite[Proposition~2.3.(ii)]{Blecher-Weaver}, there should exist singular $\sigma$-additive pure state.
Let \(\Ucal \subsetneq \Pcal(\kappa)\) be a \(\sigma\)-complete non-principal ultrafilter.
Define the state \(\rho_\Ucal\in\Sigma(\Hcal)\) by
\[
    \Braket{\rho_\Ucal, \Dbf} := \lim_\Ucal (e_i, \Dbf e_i).
\]
Then
\[
    \nu_{\rho_\Ucal}(I) = \Braket{\rho_\Ucal, \Pbf_I} = \lim_\Ucal (e_i, \Pbf_I e_i) = \mu_\Ucal(I),
\]
and so the induced measure is $\sigma$-additive and vanishes on singletons.
Thus, given the Ulam measurable cardinal \(\kappa\), we immediately obtain the $\sigma$-additive singular pure state \(\rho_\Ucal\).
The \defing{purity} means that $\rho_\Ucal$ is an extremal point of the set \(\Sigma(\Hcal)\).

\section{Quantum channels}\label{s:qc}

Recall that usually \cite{Kholevo-qsys-chan-inf} a quantum channel is a linear trace-preserving completely positive map. In our case, since we work not only with trace-class operators, we omit the trace-preserving property and have only the preserving normalization property; see, for example, how this is done in \cite{DzhenzherDzhenzherSakbaev26}.
So, a \defing{quantum channel} is a linear completely positive map \(\Sigma(\Hcal)\to\Sigma(\Hcal)\).

Now fix an Ulam measurable cardinal \(\kappa\) with a non-principal \(\sigma\)-complete ultrafilter $\Ucal$.
Let $\kappa$ be equipped with an Abelian group structure.
For example, it can be done by taking the group \(\bigoplus_{i<\kappa} \Z_2\) of finite (that is, whose pre-image of $1$ is finite) functions \(\kappa\to\Z_2\), which is equinumerous to \(\kappa\).
Denote by \(\Ubf_j \in \Bcal(\Hcal)\) the shift operator, defined by
\[
    \Ubf_j e_i = e_{i+j}.
\]
Define the quantum channel \(\Phbf_\Ucal\) by
\[
    \Braket{\Phbf_\Ucal\rho, \Dbf} := \lim_\Ucal \Braket{\rho, \Ubf^*_j\Dbf\Ubf_j}.
\]

\begin{theorem}
    For any \(\rho\in\Sigma(\Hcal)\), the state \(\Phbf_\Ucal\rho\) is singular and preserves \(\sigma\)-additivity.
\end{theorem}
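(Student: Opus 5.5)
We compute the induced measure of \(\Phbf_\Ucal\rho\) explicitly and read off both properties from it. For a diagonal \(\Dbf\) with diagonal entries \(d_i\), the operator \(\Ubf_j^*\Dbf\Ubf_j\) is again diagonal, with entries \(d_{i+j}\). In particular
\[
    \Ubf_j^*\Pbf_I\Ubf_j = \Pbf_{I-j}, \qquad I-j := \{i-j : i\in I\}.
\]
Hence the induced measure is
\[
    \nu_{\Phbf_\Ucal\rho}(I) = \lim_\Ucal \nu_\rho(I-j).
\]
Here the ultrafilter limit is over \(j<\kappa\). It exists because the values lie in the compact interval \([\,0,1\,]\). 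We should also note briefly that \(\Phbf_\Ucal\rho\) is a state: positivity, linearity and \(\Braket{\Phbf_\Ucal\rho,\Pbf_\kappa}=1\) all pass through \(\lim_\Ucal\), since \(\Ubf_j^*\Ubf_j=\Pbf_\kappa\).

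\textbf{Singularity.} By the theorem above, it suffices to show that \(\nu_{\Phbf_\Ucal\rho}\) vanishes on singletons. Fix \(i<\kappa\); then \(\nu_{\Phbf_\Ucal\rho}(\{i\}) = \lim_\Ucal \nu_\rho(\{i-j\})\). Put \(a_k := \nu_\rho(\{k\})\). Since \(\sum_k a_k\le 1\) by finite additivity, for each \(\varepsilon>0\) the set \(F_\varepsilon=\{k : a_k\ge\varepsilon\}\) is finite. Then \(\{j : a_{i-j}\ge \varepsilon\} = i - F_\varepsilon\) is finite. A non-principal ultrafilter on \(\kappa\) contains no finite set, so its complement lies in \(\Ucal\). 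Thus the \(\Ucal\)-limit is \(<\varepsilon\) for every \(\varepsilon>0\), i.e.\ it is \(0\). This step uses only non-principality.

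\textbf{Preservation of \(\sigma\)-additivity.} Assume \(\nu_\rho\) is \(\sigma\)-additive, and let \(I=\bigsqcup_{n}I_n\) be a countable disjoint union. The translates \(I_n-j\) are again disjoint with union \(I-j\), so \(\nu_\rho(I-j)=\sum_n \nu_\rho(I_n-j)\) for each \(j\). Finite additivity of \(\nu_{\Phbf_\Ucal\rho}\) is immediate from linearity of the limit. It therefore remains to interchange \(\lim_\Ucal\) with the infinite sum, which amounts to showing
\[
    \lim_\Ucal \nu_\rho\Bigl(\bigsqcup_{n>N} I_n - j\Bigr) \to 0 \quad (N\to\infty).
\]
This is the main obstacle: the tail estimate is not uniform in \(j\), and a general ultrafilter limit does not commute with countable sums.

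My plan is to use \(\sigma\)-completeness of \(\Ucal\) to argue by contradiction. Suppose the limit of the tails stays \(\ge\varepsilon\) for all \(N\). Then each set
\[
    A_N = \Bigl\{ j : \nu_\rho\Bigl(\bigsqcup_{n>N}I_n - j\Bigr)\ge \varepsilon/2\Bigr\}
\]
lies in \(\Ucal\). By \(\sigma\)-completeness, \(\bigcap_N A_N\in\Ucal\), and in particular this intersection is nonempty. But for any fixed \(j\), \(\sigma\)-additivity of \(\nu_\rho\) makes the tails tend to \(0\) as \(N\to\infty\), so \(j\notin A_N\) for large \(N\). This contradicts \(j\in\bigcap_N A_N\). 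Since the tails are monotone in \(N\), the limits along \(\Ucal\) decrease, and the contradiction shows they tend to \(0\). This yields \(\sigma\)-additivity of \(\nu_{\Phbf_\Ucal\rho}\), completing the proof.
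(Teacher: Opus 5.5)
Your proof is correct. The $\sigma$-additivity part is essentially the paper's argument. The paper proves continuity from above at $\varnothing$ for $\nu(I)=\lim_\Ucal\nu_\rho(I-j)$ by supposing $\nu(I_n)\to L>0$. It then intersects the countably many sets $\{j:\nu_\rho(I_n-j)>L/2\}\in\Ucal$ using $\sigma$-completeness, and contradicts $\sigma$-additivity of $\nu_\rho$ at a common index $j_0$. Your tails $\bigsqcup_{n>N}I_n$ are just a particular decreasing sequence with empty intersection.

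The genuine difference is in the singularity part. The paper argues by cases:
\begin{enumerate}
\item it computes $\Phbf_\Ucal\rho_{e_i}=\rho_{\Ucal-i}$;
\item it passes to normal states by writing them as countable convex combinations of basis vector states and interchanging $\lim_\Ucal$ with the sum;
\item it handles singular $\rho$ separately via $\Braket{\rho,\Pbf_{I-j}}=0$ for finite $I$;
\item it leaves the general case implicitly to the Yosida--Hewitt decomposition and linearity of $\Phbf_\Ucal$.
\end{enumerate}
Your argument is different: the atoms of $\nu_\rho$ with mass at least $\varepsilon$ form a finite set, and any non-principal ultrafilter avoids its translate. This treats all states in one stroke and needs no decomposition. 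It also makes explicit that singularity uses only non-principality of $\Ucal$, while $\sigma$-completeness is needed only for $\sigma$-additivity.

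What the paper's route gives in return is an explicit description of the output on normal inputs, $\Phbf_\Ucal\bigl(\sum_k p_k\rho_{e_k}\bigr)=\sum_k p_k\rho_{\Ucal-k}$. This is a countable mixture of shifted ultrafilter states, and it is the picture behind the ``archiving'' interpretation.
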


\begin{proof}
    First, let \(\rho = \rho_{e_i}\) be a pure vector state.
    Then for any \(\Dbf\in\Dcal(\Hcal)\)
    \[
        \Braket{\rho, \Ubf^*_j\Dbf\Ubf_j} = (\Ubf_j e_i, \Dbf \Ubf_j e_i) = (e_{i+j},\Dbf e_{i+j}).
    \]
    Thus\footnote{Note that, in version~1 of this paper, there was a mistake here. It was written that the outcome is \(\rho_\Ucal\), which can be so only if $\Ucal$ is translation-invariant. Apparently, there are no such ultrafilters.} \(\Phbf_\Ucal\rho = \rho_{\Ucal-i}\).
    In particular, this is a singular $\sigma$-additive state.

    Now let \(\rho\) be a normal state.
    Then it is a convex at most countable combination of vector pure states, that are, in their turn, convex at most countable combinations of vector pure states concentrated on basis vectors.
    Since \(\lim_\Ucal\) commutes with countable sums, we obtain that \(\Phbf_\Ucal\rho\) is again singular and $\sigma$-additive.

    Finally, let \(\rho\) itself be singular.
    In order to show the singularity of \(\Phbf_\Ucal\rho\), it is sufficient to prove that for any finite subset \(I \subset \kappa\) we have
    \[
        \Braket{\Phbf_\Ucal\rho, \Pbf_I} = 0.
    \]
    We have
    \[
        \Braket{\rho, \Ubf^*_j\Pbf_I\Ubf_j} = \Braket{\rho, \Pbf_{I-j}} = 0,
    \]
    and thus \(\lim_\Ucal 0 = 0\).

    Below we prove the $\sigma$-additivity of \(\Phbf_\Ucal\rho\) for a $\sigma$-additive \(\rho\).
    Denote
    \[
        \nu(I) := \Braket{\Phbf_\Ucal\rho, \Pbf_I} = \lim_\Ucal \nu_\rho(I-j).
    \]
    We need to show the $\sigma$-additivity of \(\nu\).
    It is clear that \(\nu\) is a finitely-additive measure on \(\Pcal(\kappa)\).
    Hence, it is sufficient to prove the continuity from above on \(\nu\).
    So, let \(I_1 \supset I_2 \supset \ldots\) be such that \(\bigcap_{n\in\N} I_n = \varnothing\).
    We need to prove that \(\lim_{n\to\infty} \nu(I_n)=0\).
    On the contrary, suppose that \(\lim_{n\to\infty} \nu(I_n)=L>0\).
    Then for sufficiently large $n>N$ we have
    \[
        \{j: \nu_\rho(I_n-j) > L/2 \} \in \Ucal.
    \]
    Since the ultrafilter \(\Ucal\) is $\sigma$-complete, we have
    \[
        \bigcap_{n >N}\{j: \nu_\rho(I_n-j) > L/2 \} \in \Ucal.
    \]
    In particular, this means that the intersection contains at least one element \(j_0\), common for all $n>N$.
    So, for this \(j_0\) and all (sufficiently large) $n>N$ we have \(\nu_\rho(I_n-j_0) > L/2\).
    But \(\nu_\rho(I_n-j_0) \xrightarrow[n\to\infty]{} 0\) by $\sigma$-additivity of \(\nu_\rho\). This leads to a contradiction.
\end{proof}

Thus, on Ulam measurable cardinals, it is possible to construct quantum channels that give singular $\sigma$-additive outcomes.
They destroy normal states, but preserve singularity and $\sigma$-additivity.
One may think of it as if these quantum channels archive all information somewhere far away on $\sigma$-complete ultrafilters.

\section{Conclusion}

We have demonstrated that the classification of quantum states on $\ell_2(\kappa)$ is deeply intertwined with the axiomatic foundations of set theory. Specifically, the existence of singular $\sigma$-additive states is not a property of the algebra of observables alone, but depends on whether the dimension $\kappa$ is an Ulam measurable cardinal.

Our work extends the foundational results of Blecher and Weaver \cite{Blecher-Weaver} by providing a concrete analytical and dynamical framework for these states. By establishing their representability via the Pettis integral, we have shown that the measure-theoretic structure of quantum states <<survives>> the transition to the singular sector in the presence of large cardinals. The construction of the quantum channel $\Phbf_\Ucal$ further illustrates this connection, acting as a transformative mechanism that <<shifts>> information from normal states into singular $\sigma$-additive outcomes. This result suggests that on sufficiently large dimensions, quantum information can be <<archived>> in a way that remains consistent with $\sigma$-additivity yet becomes inaccessible to any local or compact projections.

In \cite{Blecher-Weaver}, it was shown that a pure state defined on the algebra \(\Dcal(\ell_2(\kappa))\cong\ell_\infty(\kappa)\) of observables can be uniquely extended to a pure state on the whole observables algebra \(\Bcal(\ell_2(\kappa))\).
It will be interesting to connect these results with the Pettis representability of states; the problem here is that one cannot define the induced measure on the whole \(\Pcal(S^1(\ell_2(\kappa)))\) since it will not even be finitely-additive.
So it would be interesting to obtain the analogous results for the whole algebra \(\Bcal(\ell_2(\kappa))\) of observables.

\printbibliography

\end{document}